\newcommand{\Tr}[1]{\mathrm{Tr} #1}
\newtheorem*{theorem}{Lemma}
\newcommand{\mytitle}{Local quantum thermal susceptibility}
\begin{document}

\title{\mytitle}

\author{Antonella De Pasquale}
\email{antonella.depasquale@sns.it}
\affiliation{NEST, Scuola Normale Superiore and Istituto Nanoscienze-CNR, 
  Piazza dei Cavalieri 7, I-56126 Pisa, Italy}

\author{Davide Rossini}
\affiliation{NEST, Scuola Normale Superiore and Istituto Nanoscienze-CNR, 
  Piazza dei Cavalieri 7, I-56126 Pisa, Italy}

\author{Rosario Fazio}
\affiliation{ICTP, Strada Costiera 11, 34151 Trieste, Italy}
\affiliation{NEST, Scuola Normale Superiore and Istituto Nanoscienze-CNR, 
  Piazza dei Cavalieri 7, I-56126 Pisa, Italy}

\author{Vittorio Giovannetti}
\affiliation{NEST, Scuola Normale Superiore and Istituto Nanoscienze-CNR, 
  Piazza dei Cavalieri 7, I-56126 Pisa, Italy}

\begin{abstract}
  Thermodynamics relies on the possibility to describe systems composed of a large number 
  of constituents in terms of few macroscopic variables. 
  Its foundations are rooted into the paradigm of statistical mechanics, 
  where thermal properties originate from averaging procedures which smoothen out local details. 
  While undoubtedly successful, elegant and formally correct, this approach carries over 
  an operational problem: what is the precision at which such variables are inferred, 
  when technical/practical limitations restrict our capabilities to local probing? 
  Here we introduce the local quantum thermal susceptibility, a quantifier for the best 
  achievable accuracy for temperature estimation via local measurements. 
  Our method relies on basic concepts of quantum estimation theory, providing an operative 
  strategy to address the local thermal response of arbitrary quantum systems at equilibrium. 
  At low temperatures it highlights the local distinguishability of the ground state 
  from the excited sub-manifolds, thus providing a method to locate quantum phase transitions.
\end{abstract}


\maketitle

The measurement of temperature is a key aspect in science, technology, and in our daily life. 
Many ingenious solutions have been designed to approach different situations and 
required accuracies~\cite{Pekola}. What is the ultimate limit to the precision at which the temperature 
of a macroscopic state can be determined? An elegant answer to this question is offered 
by estimation theory~\cite{cramer, paris_book, paris1}: The precision is related 
to the heat capacity of the system~\cite{zanardi_heatcapacity1, zanardi_heatcapacity2}. 

In view of the groundbreaking potentialities offered by present-day 
nanotechnologies~\cite{gao, nano1, nano2, nano3, thermo_exp, thermo_exp1} 
and the need to control the temperature at the nano-scale, it is highly relevant to question 
whether the heat capacity is still the relevant (fundamental) precision limit to small-scale thermometry. 
The extensivity of the heat capacity is a consequence of the growing volume-to-surface ratio 
with the size~\cite{Huang}. However, at a microscopic level such construction may present 
some limitations~\cite{Hill, Hill1}. Moreover a series of theoretical efforts recently concentrated 
on a self-consistent generalization of the classical thermodynamics to small-scale physics, 
where quantum effects become 
predominant~\cite{gemmer, Allahverdyan, Allahverdyan1, Nieuwenhuizen, Lutz, LeHur, Horodecki}. 
In particular, a lot of attention has been devoted to the search for novel methods of precision 
nanothermometry that could exploit the essence of quantum 
correlations~\cite{Qest_JC, Qest_JC2, marzolino, thermo_theory, mandarino, thermo_theory_XYmodel}. 
In this context, the possibility to correctly define the thermodynamical limit, and therefore 
the existence of the temperature in the quantum regime, has been thoroughly investigated. 
It has been shown that the minimal subset of an interacting
quantum system which can be described as a canonical ensemble, with the same temperature 
of the global system, depends not only on the strength of the correlations 
within the system, but also on the temperature itself~\cite{hartmann1, hartmann1b, hartmann2}.
Using a quantum information-oriented point of view, this phenomenon has been also 
highlighted in Gaussian fermionic and bosonic states, by exploiting quantum fidelity 
as the figure of merit~\cite{saez, saez2}.
Furthermore, the significant role played by quantum correlations has been recently discussed with 
specific attention to spin- and fermonic-lattice systems with short-range interactions~\cite{eisert}. 

In this paper we propose a quantum-metrology approach to thermometry,
through the analysis of the {\it local} sensitivity
of generic quantum systems to their {\it global} temperature. 
Our strategy does not assume any constraint neither on the structure 
of the local quantum state, nor on the presence of strong quantum fluctuations 
within the system itself, while it rather moves from the observation that the temperature 
is a parameter which can be addressed only via indirect measurements. 
Specifically we introduce a new quantity which we dub 
Local Quantum Thermal Susceptibility (LQTS), and which gauges how efficiently 
the thermal equilibrium of a system is perceived by its subparts. 
More precisely, given a quantum system ${\cal AB}$ in a thermal equilibrium state, 
the LQTS $\mathfrak{S}_{\cal A}$ is a response functional which quantifies the highest achievable 
accuracy for estimating the system temperature $T$ through local measurements 
performed on a selected subsystem ${\cal A}$ of ${\cal AB}$ (see Fig.~\ref{fig:thermo}).

\begin{figure}[!b]
  \includegraphics[height=0.45\columnwidth]{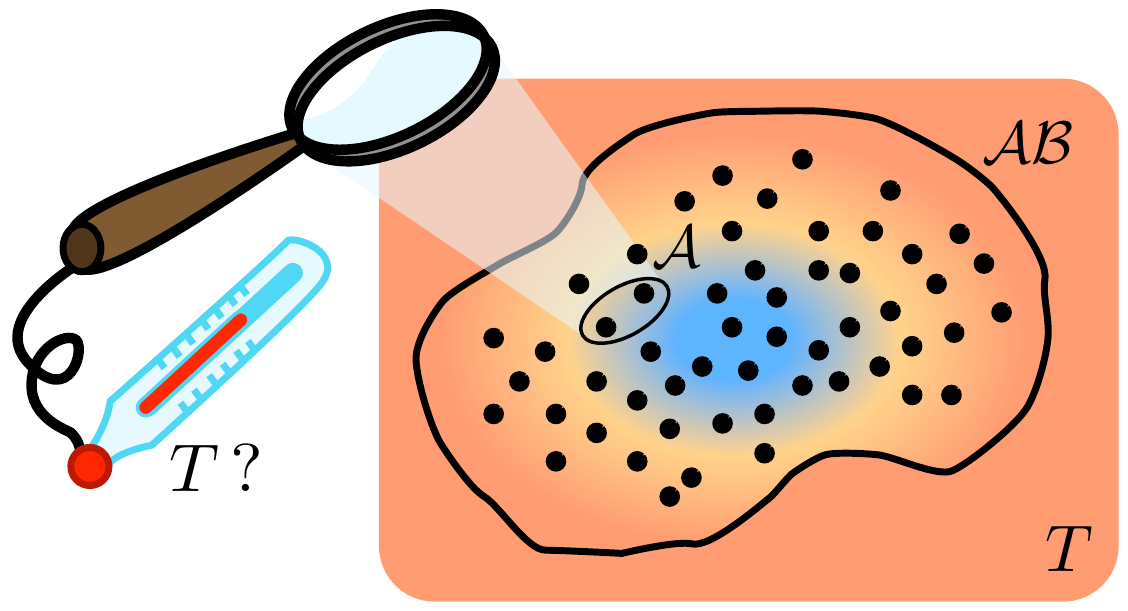}
  \caption{\textbf{The operationally grounded strategy associated to the LQTS functional.}
    A composite quantum system $\cal S$ is in thermal equilibrium with a bath at temperature $T$. 
    The LQTS functional~\eqref{LQTS_def} measures the highest achievable accuracy 
    in the estimation $T$ under the hypothesis to perform only local measurements 
    on the subsystem $\cal A$ of ${\cal AB}$.}
  \label{fig:thermo}
\end{figure}

The LQTS is in general not extensive with respect to the size of ${\cal A}$, 
yet it is an increasing function of the latter, and it reduces to the system heat capacity 
in the limit where the probed part coincides with the whole system ${\cal AB}$.
In the low-temperature limit, we shall also see that the LQTS is sensitive to the local 
distinguishability between the ground state and the first excited subspace of the composite 
system Hamiltonian. In this regime, even for a tiny size of the probed subsystem, 
our functional is able to predict the behaviour of the heat capacity and in particular 
to reveal the presence of critical regions. This naturally suggests the interpretation 
of $\mathfrak{S}_{\cal A}$ as a sort of mesoscopic version of the heat capacity which replaces 
the latter in those regimes where extensivity breaks down.

\section{Results}

\paragraph{ \textbf{The functional.}} 

Let us consider a bipartite quantum system 
${\cal AB}$ at thermal equilibrium, composed of two subsystems ${\cal A}$ and ${\cal B}$,
and described by the canonical Gibbs ensemble $\rho_{\beta}= e^{-\beta H}/{\cal Z}_\beta$.
Here $H=H_{\cal A} + H_{\cal B} + H_{\cal AB}^{\rm (int)}$ is the system Hamiltonian, 
which in the general case will include both local ({\it i.e.}, $H_{\cal A}$ and $H_{\cal B}$) 
and interaction ({\it i.e.}, $H_{\cal AB}^{\rm (int)}$) terms, 
while ${\cal Z}_\beta = \mathrm{Tr}[e^{-\beta H}] = \sum_i e^{-\beta E_i}$ 
denotes the associated partition function ($\beta=1/k_B T$ is the inverse temperature 
of the system, $k_B$ the Boltzmann constant, and $\{ E_i\}$ the eigenvalues of $H$). 
In this scenario we are interested in characterizing how the actual temperature $T$ 
is perceived locally on ${\cal A}$. 

For this purpose we resort to quantum metrology~\cite{giovannetti} and define 
the LQTS of subsystem ${\cal A}$ as
\begin{equation} \label{LQTS_def}
  \mathfrak{S}_{\cal A}[\rho_{\beta}]:= 8 \lim_{\varepsilon \to 0} \frac{1-\mathcal{F} 
    \big( \rho^{\cal A}_\beta, \rho^{\cal A}_{\beta+\varepsilon} \big) }{\varepsilon^2}\,,
\end{equation}
where $\mathcal{F}\left(\rho, \sigma\right) = \mathrm{Tr}[\sqrt{\sqrt{\rho} \, \sigma \sqrt{\rho}}]$ 
is the fidelity between two generic quantum states $\rho$ and $\sigma$~\cite{peres, jozsa}. 
The quantity~\eqref{LQTS_def} corresponds to the quantum Fisher 
information (QFI)~\cite{paris_book, paris1} 
for the estimation of $\beta$, computed on the reduced state 
$\rho^{\cal A}_\beta=\mathrm{\Tr}_{\mathcal{B}}[\rho_\beta]$.
It gauges how modifications on the global system temperature are affecting ${\cal A}$, 
the larger being $\mathfrak{S}_{\cal A}[\rho_{\beta}]$ the more sensitive being the subsystem response. 
More precisely, through the quantum Cram\'er-Rao inequality, $\mathfrak{S}_{\cal A}[\rho_{\beta}]$ 
quantifies the ultimate precision limit to estimate the temperature $T$, 
by means of any possible local (quantum) measurement on subsystem ${\cal A}$. 
In the specific, it defines an asymptotically achievable lower bound,
\begin{equation} 
  {\Delta T}^{\cal A} \geq k_B T^2/ \sqrt{N \mathfrak{S}_{\cal A}[\rho_{\beta}]} \, ,
\end{equation}
on the root-mean-square error 
${\Delta T}^{\cal A}=\sqrt{\mathbb{E}[(T^{\mathrm{est}}-T)^2]}$ of a generic local 
estimation strategy, where $T^{\mathrm{est}}$ is the estimated value of $T$, 
$\mathbb{E}[x]$ is the expectation value for a random variable $x$, 
and $N$ is the number of times the local measurement is repeated. 

By construction, $\mathfrak{S}_{\cal A}[\rho_{\beta}]$ is a positive quantity 
which diminishes as the size of ${\cal A}$ is reduced, the smaller being the portion 
of the system we have access to, the worse being the accuracy we can achieve. 
More precisely, given ${\cal A}'$ a proper subset of ${\cal A}$, we have 
$\mathfrak{S}_{{\cal A}'}[\rho_{\beta}] \leq \mathfrak{S}_{\cal A}[\rho_{\beta}]$.
In particular, when ${\cal A}$ coincides with the whole system ${\cal AB}$, 
equation~\eqref{LQTS_def} reaches its maximum value and becomes equal to the variance of the energy,
\begin{equation}
  \mathfrak{S}_{\cal AB}[\rho_{\beta}] = \mathrm{Tr} [\rho_\beta H^2] -\mathrm{Tr} [\rho_\beta H]^2 \,,
\label{eq:Heat}
\end{equation}
which depends only on the spectral properties of the system and which coincides with the system 
heat capacity~\cite{zanardi_heatcapacity1, zanardi_heatcapacity2} (note that, rigorously speaking, 
the LQTS quantifies the sensitivity of the system to its inverse temperature $\beta$; 
the corresponding susceptibility to $T = 1/(k_B \beta)$ gets a $k_B^{-2} T^{-4}$ correction term, 
which also enters the standard definition of the heat capacity). 

An explicit evaluation of the limit in equation~\eqref{LQTS_def} can be obtained 
via the Uhlmann's theorem~\cite{uhlmann} (see Methods for details). 
A convenient way to express the final result can be obtained by introducing an ancillary system 
${\cal A'B'}$ isomorphic to ${\cal AB}$ and the purification of $\rho_\beta$ defined as
\begin{equation}
  \label{PUREVEC}
  |\rho_\beta\rangle=\sum_i \frac{ e^{- \beta E_i/2}}{\sqrt{{\cal Z}_\beta}} 
  |E_i\rangle_{\cal A \cal B} \otimes |E_i\rangle_{\cal A' \cal B'}\;,
\end{equation}
where $H\!=\!\sum_i \!E_i |E_i\rangle_{\cal A \cal B} \langle E_i|$ is the spectral decomposition 
of the system Hamiltonian. 
It can then be proved that
\begin{equation}
  \label{eq:LQTSfinal}
  \mathfrak{S}_{\cal A}[\rho_{\beta}]=\mathfrak{S}_{\cal AB}[\rho_{\beta}] 
  - \sum_{j<k} \frac{(\lambda_j - \lambda_k)^2}{\lambda_j + \lambda_k} |\langle e_k|H| e_j \rangle |^2\,,
\end{equation} 
where $\{ |e_j\rangle\}$ are the eigenvectors of the reduced density matrix 
$\mathrm{Tr}_{\cal A'}[|\rho_\beta\rangle \langle\rho_\beta|]$ living on ${\cal A B B'}$, 
obtained by taking the partial trace of $|\rho_\beta\rangle$ with respect to the 
ancillary system ${\cal A'}$, while $\{ \lambda_j\}$ are the corresponding eigenvalues 
(which, by construction, coincide with the eigenvalues of~$\rho_\beta^{\cal A}$).

Equation~\eqref{eq:LQTSfinal} makes it explicit the ordering between 
$\mathfrak{S}_{\cal A}[\rho_{\beta}]$ and $\mathfrak{S}_{\cal AB}[\rho_{\beta}]$: 
the latter is always greater than the former due to the negativity of the second 
contribution appearing on the rhs. 
Furthermore, if $H$ does not include interaction terms ({\it i.e.}, $H_{\cal AB}^{\rm (int)}=0$), 
one can easily verify that $\mathfrak{S}_{\cal A}[\rho_{\beta}]$ reduces to the variance 
of the local Hamiltonian of ${\cal A}$, and is given by the heat capacity 
of the Gibbs state $e^{-\beta H_{\cal A}}/{\cal Z}^{\cal A}_\beta$ which, in this special case, 
represents $\rho_\beta^{\cal A}$, {\it i.e.}, 
$\mathfrak{S}_{\cal A}[\rho_{\beta}] = \mathrm{Tr} [\rho^{\cal A}_\beta H_{\cal A}^2] -\mathrm{Tr} [\rho^{\cal A}_\beta H_{\cal A}]^2= - \frac{\partial^2}{\partial \beta^2} \ln {\cal Z}^{\cal A}_\beta$.
Finally we observe that in the high-temperature regime ($\beta \to 0$) 
the expression~\eqref{eq:LQTSfinal} simplifies yielding 
\begin{equation}
  \mathfrak{S}_{\cal A}[\rho_{\beta}] \! = \! \frac{1}{d_{\cal A}} \mathrm{Tr} \!\! 
  \left [\tilde{H}_{\cal A}^2 - 2 \beta \tilde{H}_{\cal A} \frac{\mathrm{Tr}_{\cal B} [H^2] }{d_{\cal B}} 
    + \beta \tilde{H}_{\cal A}^3\right] \! 
  + \mathcal{O}(\beta^2)\,,
\end{equation}
where $d_{\cal A}$ and $d_{\cal B}$ denote the Hilbert space dimensions of 
${\cal A}$ and ${\cal B}$ respectively, and we defined 
$\tilde{H}_{\cal A} = \mathrm{Tr}_{\cal B}[H]/d_{\cal B}$ having set, without loss of generality,
$\mathrm{Tr}[H]=0$.

\begin{figure}[!t]
  \includegraphics[height=0.45\columnwidth]{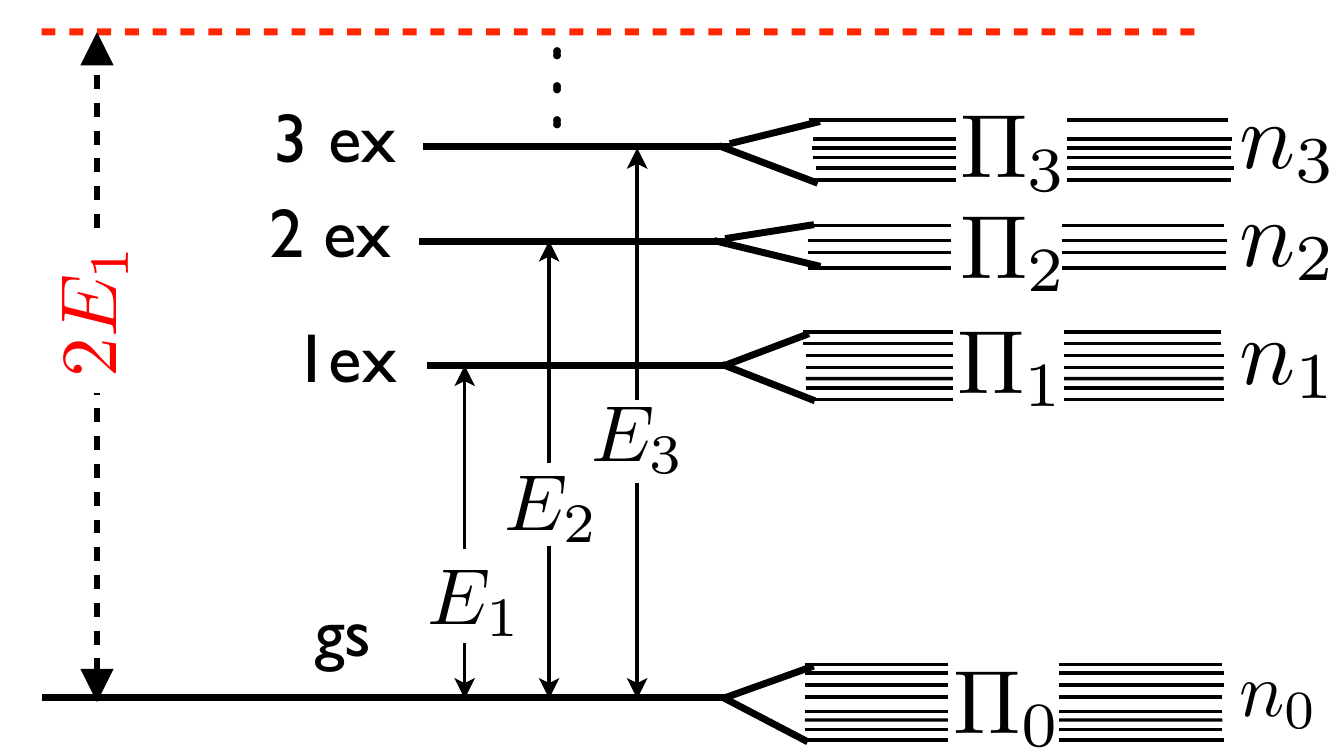}
  \caption{\textbf{Schematic representation of the low-energy spectrum for a generic 
      many-body quantum system.}
    For simplicity the ground-state (gs) energy $E_0$ is set to zero. 
    Here $\Pi_i$ denotes the normalized projector on the eigenspaces of energy $E_i$, 
    which can be $n_i$-fold degenerate.}
  \label{fig:spectrum}
\end{figure}

\paragraph{ \textbf{A measure of state distinguishability.}} 

In the low-temperature regime, the LQTS can be used to characterize how much the ground state 
of the system ${\cal AB}$ differs from the first excited subspaces when observing it 
locally on~${\cal A}$. This is a direct consequence of the fact that 
the QFI (which we used to define our functional) accounts for the degree of statistical 
distinguishability between two quantum states (in our case the reduced 
density matrices $\rho_{\beta}^{\cal A}$ and 
 $\rho_{\beta+\varepsilon}^{\cal A}$) differing by an infinitesimal change 
in the investigated parameter (in our case the inverse temperature $\beta$). 
Therefore for $\beta \to \infty$,
the LQTS can be thought as a quantifier of the local distinguishability 
among the lowest energy levels in which the system is frozen. 

To clarify this point, let us consider the general scenario 
depicted in Fig.~\ref{fig:spectrum}, where we only discuss the physics 
of the ground state (with energy $E_0 = 0$) and of
the lowest excited levels with energy $E_i$ bounded by twice 
the energy of the first excited level, $E_i \leq 2 E_1$.
The degeneracy of each considered energy eigenstate is denoted by $n_i$. 
From equation~\eqref{eq:LQTSfinal} 
it then follows that up to first order in the parameter $e^{- \beta E_1}$ we get
\begin{equation}
  \mathfrak{S}_{\cal A}[\rho_{\beta}] = 
  \sum_{i} \frac{n_i}{n_0} E_i^2 e^{- \beta E_i} \Big( 1 - \mathrm{Tr}[P_0^{\cal A} \Pi_i^{\cal A}] \Big) 
  + \mathcal{O} \left( e^{- 2\beta E_1} \right) . 
  \label{EQQ}
\end{equation} 
Here $\Pi_i^{\cal A} = \mathrm{Tr}_{\cal B}[\Pi_i]$, where $\Pi_i$ is the normalized projector 
on the degenerate subspace of energy $E_i$. 
Moreover, $P^{\cal A}_0$ is the span of the local subspace associated to the ground state, 
{\it i.e.}, $P_0^{\cal A} = \sum_{j=1}^{n_0^{\cal A}} |\phi_j \rangle_{\cal A} \langle \phi_j|$, 
with $\Pi_0^{\cal A} = \sum_{j=1}^{n_0^{\cal A}} p_j | \phi_{j} \rangle_{\cal A} \langle \phi_j|$,
and $n_0^{\cal A}$ being the number of non-zero eigenvalues ($p_j>0$) of $\Pi_0^{\cal A}$.

Equation~\eqref{EQQ} can be interpreted as follows. Our capability of measuring $\beta$ 
relies on the distinguishability between the states $\rho_\beta^{\cal A}$ 
and $\rho_{\beta+\varepsilon}^{\cal A}$, with $\varepsilon \ll \beta$. 
In the zero-temperature limit, the system lies in the ground state 
and locally reads as $\Pi_0^{\cal A}$, while at small temperatures 
the lowest energy levels start to get populated. 
If their reduced projectors $\Pi_i^{\cal A}$ ($i \geq1$) are not completely contained 
in the span of $\Pi_0^{\cal A}$, that is $\mathrm{Tr}[P_0^{\cal A} \, \Pi_i^{\cal A}]\neq 1$, 
there exist some local states whose population is null for $T=0$ and greater than zero 
at infinitesimal temperatures. 
Such difference implies that the first order in $\mathfrak{S}_{\cal A}$ 
does not vanish. On the contrary, if the reduced projectors $\Pi_i^{\cal A}$ 
are completely contained in the span of $\Pi_0^{\cal A}$, 
that is $\mathrm{Tr}[P_0^{\cal A} \, \Pi_i^{\cal A}]=1$, 
we can distinguish $\rho_\beta^{\cal A}$ from $\rho_{\beta+\varepsilon}^{\cal A}$ 
only thanks to infinitesimal corrections $\mathcal{O} (\! \exp(-2\beta E_1))$ 
to the finite-valued populations of the lowest energy levels 
(see Methods for an explicit evaluation of the latter).
In conclusion, the quantity $\mathfrak{S}_{\cal A}[\rho_{\beta \to \infty}]$ acts as 
a thermodynamical indicator of the degree of distinguishability 
between the ground-state eigenspace
and the lowest energy levels in the system Hamiltonian.

\paragraph{\textbf{LQTS and phase estimation.}}

A rather stimulating way to interpret equation~\eqref{eq:LQTSfinal} comes from the observation that, 
in the extended scenario where we have purified ${\cal AB}$ as in equation~\eqref{PUREVEC}, 
the global variance~\eqref{eq:Heat} formally coincides with the QFI 
$\mathfrak{F}_{\cal ABA'B'}(|\rho_\beta^{(\varphi)}\rangle)$ associated with the estimation 
of a phase $\varphi$ which, for given $\beta$, has been imprinted into the system 
${\cal ABA'B'}$ by a unitary transformation $e^{-i H' \varphi/2}$, 
with $H'$ being the analogous of $H$ on the ancillary system ${\cal A'B'}$, 
{\it i.e.}, $\mathfrak{S}_{\cal AB}[\rho_{\beta}]=\mathfrak{F}_{\cal ABA'B'}(|\rho_\beta^{(\varphi)}\rangle)$ 
where $|\rho_\beta^{(\varphi)}\rangle=e^{-i H' \varphi/2}|\rho_\beta\rangle$~\cite{paris1,giovannetti,caves}. 
Interestingly enough a similar connection can be also established with the second term 
appearing in the rhs of equation~\eqref{eq:LQTSfinal}: indeed the latter coincides 
with the QFI $\mathfrak{F}_{\cal B A' B'}( |\rho_\beta^{(\varphi)}\rangle )$ which defines 
the Cram\'er-Rao bound for the estimation of the phase $\varphi$ 
of $|\rho_\beta^{(\varphi)}\rangle$, under the constraint of having access only on 
the subsystem ${\cal B A' B'}$ ({\it i.e.}, that part of the global system 
which is complementary to ${\cal A}$). 
Accordingly we can thus express the LQTS as the difference between these two QFI 
phase estimation terms, the global one vs.~the local one or, by a simple rearrangement 
of the various contributions, construct the following identity 
\begin{equation}
  \mathfrak{S}_{\cal A}[\rho_{\beta}] + \mathfrak{F}_{\cal B A' B'} ( |\rho_\beta^{(\varphi)}\rangle )
  = \langle \Delta H^2 \rangle_\beta \;,
\end{equation}
which establishes a complementarity relation between the temperature estimation 
on ${\cal A}$ and the phase estimation on its complementary counterpart 
${\cal B A' B'}$, by forcing their corresponding accuracies to sum up to 
the energy variance $\langle \Delta H^2 \rangle_\beta$ of the global system~\eqref{eq:Heat}.

\paragraph{ \textbf{Local thermometry in many-body systems.}} 

We have tested the behaviour of our functional on two models of quantum spin chains, 
with a low-energy physics characterized by the emergence of quantum phase transitions (QPTs) 
belonging to various universality classes~\cite{sachdev}.

\begin{figure*}[!t]
  \includegraphics[width=1\textwidth]{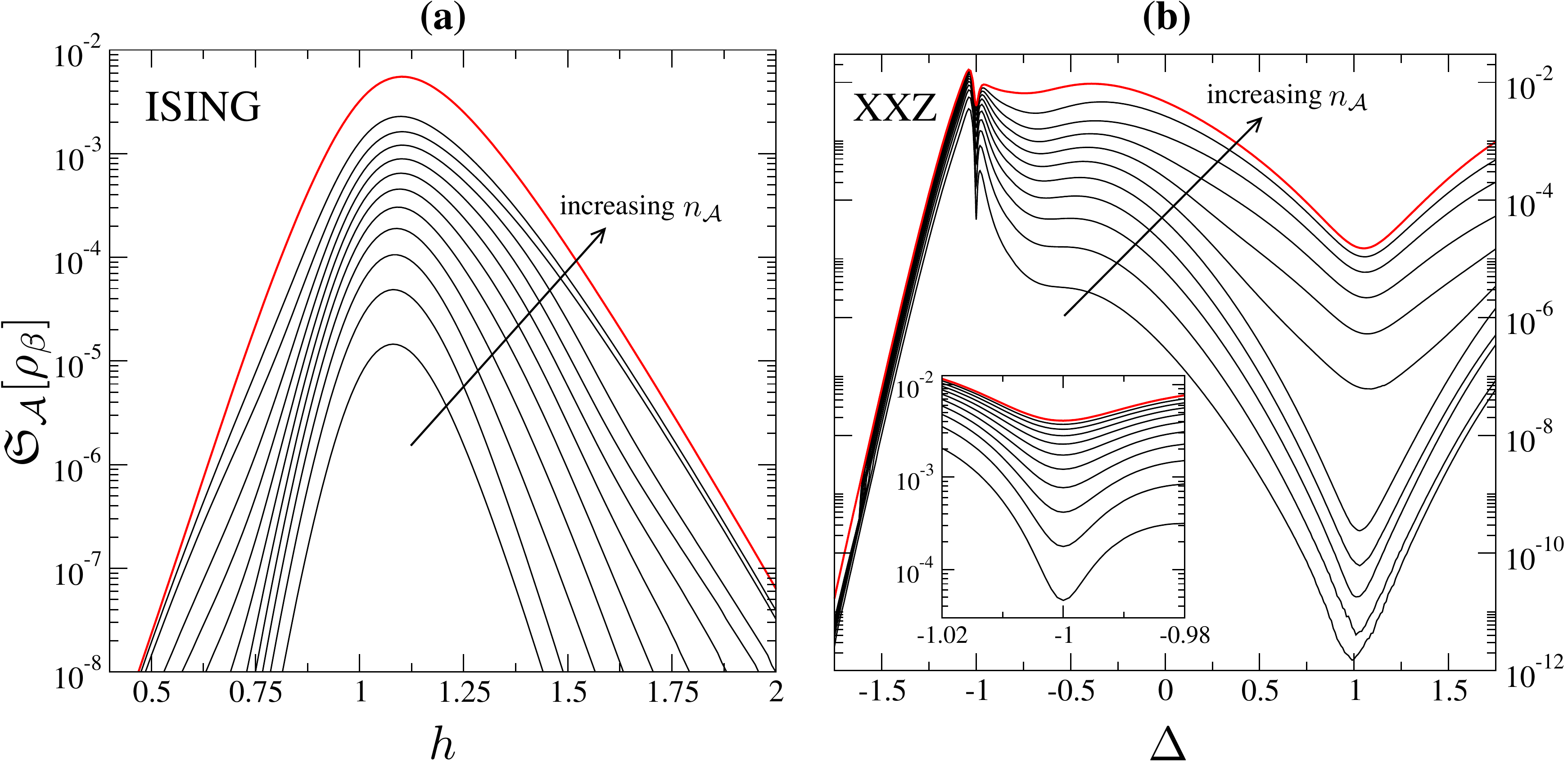} 
   \caption{\textbf{The LQTS in the Ising and the Heisenberg models at low temperature.} 
    We numerically computed the LQTS of equation~\eqref{LQTS_def} in the low-temperature limit 
    for a chain with $L=12$ sites in the following two cases:
    ({\bf a}) the Ising model as a function of the transverse field $h$;
    ({\bf b}) the Heisenberg XXZ chain as a function of the anisotropy $\Delta$. 
    The uppermost (red) curve corresponds to the global quantum thermal susceptibility,
    that is the heat capacity.
    The other curves stand for different sizes $n_{\cal A}$ of the measured subsystem 
    ${\cal A}$ of ${\cal AB}$ ($n_{\cal A}$ increases along the direction of the arrow). 
    The inset in panel ({\bf b}) magnifies the data around $\Delta = -1$.
    In the XXZ model, the LQTS with $n_{\cal A} = 1$ can be proved to rigorously vanish.
    The inverse temperature has been fixed in both cases at $\beta = 9$.}
  \label{fig:LQTS_Ising}
\end{figure*}

Specifically, we consider the quantum spin-$1/2$ Ising and Heisenberg chains, 
in a transverse magnetic field $h$ and with a $z$-axis anisotropy $\Delta$ respectively, 
\begin{eqnarray}
  H_{\rm Ising} & \! = \! & 
  - J \sum_i \left[ \sigma_i^{x} \sigma_{i+1}^{x} + h \sigma_i^{z} \right] , \label{eq:Ising}\\
  H_{\rm XXZ} & \! = \! & J \sum_i \left[ (\sigma_i^{x} \sigma_{i+1}^{x} 
    + \sigma_i^{y} \sigma_{i+1}^{y}) 
    \! + \! \Delta \, \sigma_i^{z} \sigma_{i+1}^{z} \right] \! . \label{eq:XXZ}
\end{eqnarray}
Here $\sigma^{\alpha}_i$ denotes the usual Pauli matrices ($\alpha = x,y,z$) on the $i$th site, 
and periodic boundary conditions have been assumed.
We set $J=1$ as the system's energy scale.
At zero temperature, the model~\eqref{eq:Ising} presents a $\mathbb{Z}_2$-symmetry breaking
phase transition at $\vert h_c \vert=1$ belonging to the Ising universality class.
The Hamiltonian~\eqref{eq:XXZ} has a critical behaviour for $-1 \leq \Delta \leq 1$,
while it presents a ferromagnetic or antiferromagnetic ordering elsewhere. 
In the latter case, the system exhibits a first-order QPT in correspondence to the ferromagnetic 
point $\Delta_{\rm f} = -1$, and a continuous QPT of the Kosterlitz-Thouless type
at the antiferromagnetic point $\Delta_{\rm af} = 1$.

Figure~\ref{fig:LQTS_Ising} displays the small-temperature limit of $\mathfrak{S}_{\cal A}[\rho_{\beta}]$ 
for the two models above, numerically computed by exploiting expression~\eqref{eq:LQTS_numerical} in Methods. 
We first observe that, as expected, for all the values of $h$ and $\Delta$ the LQTS monotonically 
increases with increasing the number $n_{\cal A}$ of contiguous spins belonging to 
the measured subsystem ${\cal A}$. More interestingly, we find that even when $\cal A$ 
reduces to two or three sites, its thermal behaviour qualitatively reproduces the same features 
of the global system (represented in both models by the uppermost curve).
In particular, even at finite temperatures and for systems composed of twelve sites, 
the LQTS is sensitive to the presence of critical regions where quantum fluctuations 
overwhelm thermal ones. 
The reminiscence of QPTs at finite temperatures has been already discussed via a 
quantum-metrology approach, through the analysis of the Bures metric tensor in 
the parameter space associated to the temperature and the external parameters~\cite{zanardiT}. 
The diagonal element of such tensor referring to infinitesimal variations 
in temperature, corresponds to the thermal susceptibility of the whole system. 
The latter quantity has been recently studied for the XY model~\cite{thermo_theory_XYmodel}, 
showing its sensitivity to critical points of Ising universality class. 

In the low-temperature regime such global sensitivity can be understood within 
the Landau-Zener (LZ) formalism~\cite{LZ}.
This consists of a two-level system, whose energy gap $\Delta E$
varies with respect to an external control parameter $\Gamma$, and presents 
a minimum $\Delta E_{\rm min}$ in correspondence to some specific value $\Gamma_c$.
Conversely, the global heat capacity~\eqref{eq:Heat} may exhibit a maximum 
or a local minimum at $\Gamma_c$,
according to whether $\Delta E_{\rm min}$ is greater or lower than the value of $\Delta E^\star$ 
at which the expression $\mathfrak{S}_{\cal AB}[\rho_\beta] \equiv \langle \Delta H^2 \rangle_\beta$ 
is maximum in $\Delta E$, respectively.
Indeed it can be shown that $\langle \Delta H^2 \rangle_\beta$ for a two-level system 
exhibits a non-monotonic behaviour as a function of $\Delta E$, at fixed $\beta$ (see Methods). 
Quite recently, an analogous mechanisms has been also pointed out for the global
heat capacity in the Lipkin-Meshkov-Glick model~\cite{mandarino}. 
The LZ formalism represents a simplified picture of the mechanism underlying QPTs in many-body systems. 
However, by definition the temperature triggers the level statistics and the equilibrium properties 
of physical systems. Therefore, both the heat capacity of the global 
system~\cite{zanardi_heatcapacity1, zanardi_heatcapacity2} 
and the LQTS of its subsystems are expected to be extremely sensitive to the presence of critical regions 
in the Hamiltonian parameter space. 

We also performed a finite-size scaling analysis of $\mathfrak{S}_{\cal A}[\rho_{\beta}]$
as a function of the size of the measured subsystem.
For slightly interacting systems, one expects the LQTS to be well approximated 
by the heat capacity of ${\cal A}$ (at least when this subsystem is large enough). 
The latter quantity should scale linearly with its size $n_{\cal A}$.
This is indeed the case for the Ising model~\eqref{eq:Ising}, where a direct calculation
of $\langle \Delta H^2 \rangle_\beta$ can be easily performed~\cite{thermo_theory_XYmodel}.
Our data for the scaling of the stationary points of $\mathfrak{S}_{\cal A}[\rho_{\beta}]$
close to QPTs suggest that significant deviations from a linear growth with $n_{\cal A}$ 
are present (see the Supplementary Information). This indicates that correlations 
cannot be neglected for the sizes and the systems considered here.

\section{Discussion}

We have proposed a theoretical approach to temperature locality based on quantum estimation theory. 
Our method deals with the construction of the \textit{local quantum thermal susceptibility}, 
which operationally highlights the degree at which the thermal equilibrium of the global system 
is perceived locally, avoiding any additional hypothesis on the local structure of the system.
This functional corresponds to the highest achievable accuracy up to which it is possible 
to recover the system temperature at thermal equilibrium via local measurements. 
Let us remark that, even if in principle the Cram\'er-Rao bound is achievable, 
from a practical perspective it represents a quite demanding scenario, as it requires 
the precise knowledge of the Hamiltonian, the possibility to identify and perform the 
optimal measurements on its subsystems, and eventually a large number of copies of the system.
However, in this manuscript we have adopted a more theoretical perspective, and focused on 
the geometrical structure of the quantum statistical model underlying local thermalization.

In the low-temperature regime our functional admits an interpretation as a measure 
of the local state distinguishability between the spaces spanned by the Hamiltonian 
ground state and its first energy levels. 
Furthermore we established a complementarity relation between the highest achievable accuracy 
in the local estimation of temperature and of a global phase, by showing that 
the corresponding accuracies associated with complementarity subsystems 
sum up to heat capacity of the global system. 
Finally, we considered two prototypical many-body systems featuring quantum phase transitions, 
and studied their thermal response at low temperatures. 
On one hand, we found that optimal measurements on local systems provide reliable predictions 
on the global heat capacity. 
On the other hand, our functional is sensitive to the presence of critical regions, 
even though the total system may reduce to a dozen of components 
and the measured subsystem to one or two sites. Let us remark that most of the results 
presented herewith do not refer to any specific choice of the interaction Hamiltonian,
$H_{\cal AB}^{\rm (int)}$ between $\cal A$ and $\cal B$. As an interesting implementation of our scheme, 
we foresee the case of non-thermalizing interactions~\cite{stace, zwierz}, whose potentialities 
for precision thermometry have been already unveiled.

We conclude by noticing that, while in this article we focused on temperature, the presented approach 
can be extended to other thermodynamic quantities (like entropy, pressure, chemical potential etc.).
Furthermore it seems plausible to adopt quantum estimation strategies to tackle the problem 
of providing self-consistent definition of heat and work for microscopic systems (see Ref.~\cite{campisi} 
and references therein). One of the main difficulties to this end derives indeed from the fact that 
these thermodynamic functionals are processes rather than state variables:
accordingly there are no quantum observables associated with them. 
In this respect a quantum-estimation strategy not explicitly referred to a specific 
quantum observable, but rather bearing the geometrical traits of the Hilbert space associated 
to the explored systems, appears as a valid chance to close this gap.

\section{Methods} 

\paragraph{ \textbf{Derivation of equation~(\ref{eq:LQTSfinal}).}} 
Let us recall the definition of the LQTS for a given subsystem $\cal A$ of
a global system $\cal S$ at thermal equilibrium:
\begin{equation} 
  \label{LQTS_def2}
  \mathfrak{S}_{\cal A}[\rho_{\beta}]= 8 \lim_{\varepsilon \to 0} \frac{1-\mathcal{F} 
    \big( \rho^{\cal A}_\beta, \rho^{\cal A}_{\beta+\varepsilon} \big) }{\varepsilon^2}\,,
\end{equation}
where $\mathcal{F}\left(\rho, \sigma\right) = \mathrm{Tr}[\sqrt{\sqrt{\rho} \, \sigma \sqrt{\rho}}]$ 
is the fidelity between two generic quantum states $\rho$ and $\sigma$.
According to the Uhlmann's theorem~\cite{uhlmann}, we can compute $\cal F$ as 
\begin{equation}
  \label{eq:fidelity}
  \mathcal{F}\left(\rho^{\cal A}_\beta, \rho^{\cal A}_{\beta+\varepsilon}\right) 
  = \max_{|\rho^{\cal A}_\beta\rangle,|\rho^{\cal A}_{\beta+\varepsilon}\rangle} |\langle \rho^{\cal A}_\beta 
  |\rho^{\cal A}_{\beta+\varepsilon}\rangle|,
\end{equation} 
where the maximization involves all the possible purifications 
$|\rho^{\cal A}_\beta\rangle$ and $|\rho^{\cal A}_{\beta+\varepsilon}\rangle$ 
of $\rho^{\cal A}_\beta$ and $\rho^{\cal A}_{\beta+\varepsilon}$ 
respectively through an ancillary system $a$. A convenient choice 
is to set $a = {\cal B A' B'}$, with ${\cal A'B'}$ isomorphic to ${\cal AB}$. 
We then observe that, by construction, the vector $|\rho_\beta\rangle$ of equation~\eqref{PUREVEC}, 
besides being a purification of $\rho_\beta$, is also a particular purification 
of $\rho^{\cal A}_\beta$. We can now express the most generic purification of the latter as 
\begin{equation}
  |\rho^{\cal A}_\beta\rangle = ( \openone_{\cal{A}} \otimes V ) |\rho_\beta\rangle 
  = ( \openone_{\cal{A}} \otimes V ) \frac{ e^{- \beta H/2}}{\sqrt{{\cal Z}_\beta}} |\openone_E \rangle\,,
\end{equation}
where $V$ belongs to the set of unitary transformations on~$a$, where $\openone_{\cal X}$ 
represents the identity operator on the system ${\cal X}$, and where in the last equality 
we introduced the vector 
$|\openone_E \rangle = \sum_i |E_i\rangle_{\cal A \cal B}|E_i\rangle_{\cal A' \cal B'}$, 
$|E_i\rangle_{\cal AB}$ being the eigenvectors of $H$. 
We can thus write the fidelity~\eqref{eq:fidelity} as
\begin{equation}
  \mathcal{F} = \sqrt{\frac{\cal Z_\beta}{\cal Z_{\beta+\varepsilon}}}\max_{V} \Big|\langle \rho_\beta| 
  \big[ ( \openone_{\cal{A}} \otimes V ) \big( e^{ - \varepsilon H/2 }\otimes \openone_{\cal A' \cal B'} \big) \big] |\rho_{\beta}\rangle \Big| \,.
\end{equation}
Since we are interested in the small-$\varepsilon$ limit, without loss of generality we set 
$V=\exp(i \, \varepsilon \, \Omega)$, with $\Omega$ being an Hermitian operator on the ancillary 
system $a$. It comes out that, up to corrections of order $\mathcal{O}(\varepsilon^3)$, the LQTS reads
\begin{eqnarray}
   \mathfrak{S}_{\cal A}[\rho_{\beta}] = \mathfrak{S}_{\cal AB}[\rho_{\beta}] 
   + 4 \min_\Omega \Big \{ \mathrm{Tr} [\rho_\beta^a \Omega^2]-\mathrm{Tr} [\rho_\beta^a \Omega]^2 \nonumber \\+ \frac{i}{2} \langle \rho_\beta| [ \Omega, H] | \rho_\beta \rangle\Big\}\,,
  \label{eq:LQTS_min}
\end{eqnarray} 
where we have defined $\rho_\beta^a = \mathrm{Tr}_{\cal A}[|\rho_\beta\rangle \langle\rho_\beta|]$. 
By differentiating the trace with respect to $\Omega$ we determine the minimization condition 
for it, yielding
\begin{equation}
  \label{eq:Omega}
  (\Omega - \omega) \rho_\beta^a + \rho_\beta^a (\Omega - \omega) = Q\,,
\end{equation}
with $\omega=\mathrm{Tr}[\rho_\beta^a \Omega]$ and $Q= - \frac{i}{2} \big[ H', \rho_\beta^a \big]$,
$H'$ being the analogous of $H$ which acts on ${\cal A'B'}$ 
(by construction $H |\rho_\beta\rangle = H' |\rho_\beta\rangle$).
Equation~\eqref{eq:Omega} explicitly implies that $\Omega$ does not depend on $\omega$, 
which, without loss of generality, can be set to zero. 
Moreover, it enables to rewrite the LQTS in equation~\eqref{eq:LQTS_min} as
\begin{equation}
  \label{eq:LQTS1}
  \mathfrak{S}_{\cal A}[\rho_{\beta}] = \mathfrak{S}_{\cal AB}[\rho_{\beta}] -4 \, \mathrm{Tr}[ \rho_\beta^a \, \Omega^2]\,. 
\end{equation} 
The solution of the operatorial equation~\eqref{eq:Omega} can be found by applying 
the Lemma presented at the end of this section, yielding
\begin{eqnarray} 
\Omega &=& \Omega_0 + \frac{i}{2}\left(PH'R+RH'P\right) \nonumber \\
 & &+\frac{i}{2}\left[\sum_{n=1}^{+\infty} (-1)^n \left(P{\rho_\beta^a}^{\circleddash n} H' {\rho_\beta^a}^n - h.c.\right) \right]
\end{eqnarray}
with $\Omega_0$ being an operator which anti-commutes with $\Omega$, ${\rho_\beta^a}^{\circleddash m}$ 
being the Moore-Penrose pseudoinverse of $\rho_\beta^a$ to the power $m$, 
while $R$ being the projector on kernel of $\rho_\beta^a$, 
and $P =\openone_a-R$ being is complementary counterpart. 
By substituting this expression in equation~\eqref{eq:LQTS1}, we finally get
\begin{equation}
\mathfrak{S}_{\cal A}[\rho_{\beta}] = 2 \sum_{j, k}   \frac{\lambda_j \lambda_k}{\lambda_j + \lambda_k} |\langle e_k|H' | e_j \rangle |^2 - \Tr [\rho_\beta H]^2\,,
\end{equation}
where $\rho_\beta^a=\sum_{i}\lambda_i |e_i \rangle \langle e_i|$ is the spectral decomposition 
of $\rho_\beta^a$, sharing the same spectrum with $\rho_\beta^{\cal A}$. 
The expression above holds for both invertible and not invertible $\rho_\beta^a$. 
To the latter scenario belongs the case in which $H=H_{\cal A}+H_{\cal B}$, 
where one can easily prove that the LQTS reduces to the variance of the local 
Hamiltonian $H_{\cal A}$, {\it i.e.} 
$\mathfrak{S}_{\cal A}= \mathrm{Tr} [\rho_\beta H_{\cal A}^2] -\mathrm{Tr} [\rho_\beta H_{\cal A}]^2$ 
(notice that the non-zero eigenvalues of $\rho_\beta^a$ are 
$\lambda_i=e^{- \beta E_i^{\cal A}}/ {\cal Z}_\beta^{\cal A}$ which correspond to 
$| e_i \rangle = |E_i^{\cal A}\rangle \otimes |\rho_\beta^{\cal B}\rangle$, 
being $H_{\cal A}=\sum_{i}E_i^{\cal A} | E_i^{\cal A} \rangle \langle E_i^{\cal A}|$, 
${\cal Z}_\beta^{\cal A}=\Tr[e^{-\beta H_{\cal A}}]$ and $|\rho_\beta^{\cal B}\rangle$ 
the purification of $\rho_\beta^{\cal B}$ through the ancillary system $\cal{B'}$). 
The expression above can be also rewritten as
\begin{equation}
  \mathfrak{S}_{\cal A}[\rho_{\beta}]=\mathfrak{S}_{\cal AB}[\rho_{\beta}] - \sum_{j<k} \frac{(\lambda_j - \lambda_k)^2}{\lambda_j + \lambda_k} |\langle e_k|H' | e_j \rangle |^2\,,
\end{equation}
which can be cast into equation~\eqref{eq:LQTSfinal} by simply exploiting the fact that 
the system is symmetric with respect to the exchange of ${\cal AB}$ with ${\cal A'B'}$.

It is finally useful to observe that LQTS can be also expressed in terms of 
the eigenvectors of ${\cal A}$, $\rho_\beta^{\cal A}=\sum_{i} \lambda_i |g_i \rangle \langle g_i|$ as: 
\begin{equation}
  \mathfrak{S}_{\cal A}[\rho_{\beta}]= 2 \sum_{j, k} \frac{|\Tr[\rho_\beta H | g_j \rangle \langle g_k|]|^2}{\lambda_j + \lambda_k} - \Tr[\rho_\beta H]^2,
  \label{eq:LQTS_numerical}
\end{equation}
where we have used the Schmidt decomposition of $|\rho_\beta \rangle$, with respect 
to bipartition ${\cal A}a$, 
\begin{equation}
  |\rho_\beta \rangle = \sum_{i} \sqrt{\lambda_i}|g_i \rangle |e_i \rangle. 
\end{equation}
In particular, expression~\eqref{eq:LQTS_numerical} can be exploited in order to 
numerically compute the LQTS, for instance when dealing with quantum many-body systems 
(see Fig.~\ref{fig:LQTS_Ising} and the discussion in the Supplementary Information).
\vspace{0.5cm}

\begin{theorem} 
  For any assigned operators $X,Y$ satisfying the equation 
  \begin{equation}
    \label{eqW} 
    X W + W X = Y \,,
  \end{equation} 
 admit solutions of the form 
  \begin{eqnarray}
    W&=&W_0 + X^{\circleddash 1} Y R + R Y X^{\circleddash 1} \nonumber \\
     &&+\sum_{n=0}^{+\infty} (-1)^n X^n P Y (X^{\circleddash (n+1)}-R)\,,
  \end{eqnarray}
  where $X^{\circleddash 1}$ is the Moore Penrose pseudoinverse of $X$, 
  $R$ is the projector on kernel of $X$, $P=\mbox{\rm \openone} - R$ ({\rm $\openone$} indicates the identity matrix) and
  $W_0$ is an generic operator which anti-commutes with $Y$. 
  Furthermore if $X$ and $Y$ are Hermitian, equation~\eqref{eqW} admits solutions 
  which are Hermitian too: the latter can be expressed as 
  \begin{eqnarray}
    W&=&W_0 + X^{\circleddash 1} Y R + R Y X^{\circleddash 1} \\
    && +\frac{1}{2} \bigg[ \sum_{n=0}^{+\infty} (-1)^n X^n P Y (X^{\circleddash (n+1)}-R) + h.c.\bigg], \nonumber
  \end{eqnarray}
  where now $W_0$ is an arbitrary Hermitian operator which anti-commutes with $Y$. 
 \end{theorem}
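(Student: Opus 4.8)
The plan is to treat equation~\eqref{eqW} as a linear (Sylvester/Lyapunov-type) equation for $W$, whose only real subtlety is that $X$ need not be invertible. I would therefore split the space on which $X$ acts into $\ker X$ and its complement: let $R$ be the projector onto $\ker X$ and $P=\openone-R$, so that (in the Hermitian case, which is the one needed for the application, and with the obvious $\ker X$ vs.\ $\mathrm{ran}\,X$ modifications otherwise) $XP=PX=X$, $XR=RX=0$, $X^{\circleddash 1}X=P$, $X^{\circleddash 1}R=0$, and $XX^{\circleddash 1}$ is the projector onto $\mathrm{ran}\,X$. Sandwiching~\eqref{eqW} between $P$ and $R$ decomposes it into independent blocks: the $R(\cdot)R$ block reads $0=RYR$, which is the solvability (consistency) condition that must be recorded, and it imposes nothing on $RWR$, which is thus left free — one source of the arbitrary piece $W_0$.

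For the other blocks I would proceed in two steps. The two mixed blocks reduce, using $XR=RX=0$, to the single-step equations $X\,(PWR)=PYR$ and $(RWP)\,X=RYP$, solved at once by the pseudoinverse and giving precisely the terms $X^{\circleddash 1}YR$ and $RYX^{\circleddash 1}$ of the asserted formula. The genuinely non-trivial block is $XW'+W'X=PYP$ with $X$ invertible on $\mathrm{ran}\,X$; I would rewrite it as $W'=PYX^{\circleddash 1}-XW'X^{\circleddash 1}$ and iterate, producing the Neumann-type series $\sum_{n\ge 0}(-1)^nX^nPY\,X^{\circleddash(n+1)}$ plus a remainder $(-1)^{N+1}X^{N+1}W'X^{\circleddash(N+1)}$. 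Replacing $X^{\circleddash(n+1)}$ by $X^{\circleddash(n+1)}-R$ (they act identically on $\mathrm{ran}\,X$) is the bookkeeping device that lets one package the mixed-block contributions and the $\mathrm{ran}\,X$-block series into the single sum written in the statement; adding the homogeneous $W_0$ then gives the full candidate. The last step is the honest one: substitute this $W$ back into~\eqref{eqW} and check, using $XR=RX=0$ and $XX^{\circleddash 1}=X^{\circleddash 1}X=P$ on $\mathrm{ran}\,X$, that the $n$-th and $(n+1)$-th terms of the series telescope and that, together with $RYR=0$, exactly $Y$ is reproduced. The arbitrariness is then pinned down by noting that two solutions differ by an element of $\ker(W\mapsto XW+WX)$, and passing to an eigenbasis of $X$ shows $(x_a+x_b)(W_0)_{ab}=0$, i.e.\ $W_0$ ranges over the operators satisfying $XW_0+W_0X=0$ (equivalently, anticommuting with $X$).

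I expect the main obstacle to be the convergence of the series $\sum_n(-1)^nX^nPY\,X^{\circleddash(n+1)}$ and the vanishing of the iteration remainder as $N\to\infty$: the $n$-th term is controlled by ratios of eigenvalues of $X$, so the resummation is not automatic for a general $X$ and must be justified under the hypotheses actually in force — in the application $X=\rho_\beta^a$ is a density operator whose spectrum is bounded away from $0$ on $\mathrm{ran}\,X$, which is what makes the series legitimate — while the block bookkeeping for non-normal $X$, though fiddly, is routine. Finally, the Hermitian refinement is easy: taking the adjoint of~\eqref{eqW} and using $X=X^\dagger$, $Y=Y^\dagger$ shows that $W^\dagger$ solves it whenever $W$ does, hence so does $\tfrac12(W+W^\dagger)$; applied to the particular solution constructed above this symmetrizes the pseudoinverse terms (with $X^{\circleddash 1}YR$ and $RYX^{\circleddash 1}$ already adjoints of one another) into the $\tfrac12[\,\cdots+\mathrm{h.c.}\,]$ combination of the statement, and $W_0$ may be taken Hermitian.
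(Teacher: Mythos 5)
Your proposal is correct and follows essentially the same route as the paper's proof: the $R$/$P$ block decomposition of $W$, the consistency condition $RYR=O$, the pseudoinverse solution of the mixed blocks, the recursive (Neumann-type) series for the $PWP$ block, and symmetrization $W\mapsto(W+W^{\dag})/2$ for the Hermitian case. Your added points --- the convergence of the series on $\mathrm{ran}\,X$ and the identification of $W_0$ as anti-commuting with $X$ (the statement's ``anti-commutes with $Y$'' is evidently a typo, cf.\ equation~\eqref{eqW0}) --- are refinements of, not departures from, the paper's argument.
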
 

\begin{proof}
Since~\eqref{eqW} is a linear equation, a generic solution can be expressed as the sum 
of a particular solution plus a solution $W_0$ of the associated homogeneous equation, 
{\it i.e.} an operator which anti-commute with $X$, 
\begin{equation} 
  \label{eqW0}
  X W_0 + W_0 X = 0 \,.
\end{equation} 
A particular solution $W$ of equation~\eqref{eqW} can be always decomposed as
\begin{equation}
  W= RWR+RWP+PWR+PWP\,.
\end{equation}
Notice that by definition $RX=XR=O$, where $O$ identifies the null operator. 
Multiplying~\eqref{eqW} on both sides by $R$, one gets the condition $RYR=O$. 
The operator $W$, solution of equation~\eqref{eqW}, is defined up to its projection 
on the kernel subspace, that is
\begin{equation}
  W'= W + RWR \implies X W' + W' X = Y\,.
\end{equation}
Therefore, without loss of generality we can set
\begin{equation}
  RWR=O\,.
\end{equation}
Multiplying equation~\eqref{eqW} by $X^{\circleddash 1}$ on the right side 
and repeating the same operation on the left side we get:
\begin{eqnarray}
  R W P & = & R Y X^{\circleddash 1}\,, \\ 
  P W R & = & X^{\circleddash 1}YR \,.
\end{eqnarray}
On the other hand, $PWP$ satisfies the relation
\begin{equation}
  PWP= P Y X^{\circleddash 1} - P Y R - X (PWP) X^{\circleddash 1}\,.
\end{equation}
This equation can be solved recursively in $PWP$ and gives
\begin{equation}
  PWP= \sum_{n=0}^{\infty} (-1)^n X^n P Y (X^{\circleddash (1+n)} - R)\,,
\end{equation}
thus concluding the first part of the proof. 
The second part of the proof follows simply by observing that, if $X$ and $Y$ are Hermitian,
and if $W$ solves equation~\eqref{eqW}, then also its adjoint counterpart does. 
Therefore for each solution $W$ of the problem we can construct an Hermitian one 
by simply taking $(W+W^\dag)/2$. 
\end{proof}

\paragraph{ \textbf{Second-order term corrections to LQTS.}} 
In the low-temperature regime ($\beta\rightarrow \infty$), we have computed the second-order 
correction term to the LQTS, that is of $\mathfrak{S}_{\cal A}[\rho_\beta]$ in equation~\eqref{LQTS_def}, 
and found: 
\begin{eqnarray}
  \bigg( \!\! \frac{E_1 n_1}{n_0} \!\! \bigg)^{\!2} \! e^{- 2\beta E_1} \bigg\{ \!\! -\!2 \! + \! \mathrm{Tr}[P_0^{\cal A} \Pi_1^{\cal A} +\Pi_1^{\cal A}{\Pi_0^{\cal A}}^{\circleddash 1} \Pi_1^{\cal A}(1+P_0^{\cal A})] \nonumber \\ -\!2 \sum_{n=0}^{+\infty} (-1)^n \mathrm{Tr} \left[\Pi_1^{\cal A} {\Pi_0^{\cal A}}^{\circleddash (n+2)}\Pi_1^{\cal A}{\Pi_0^{\cal A}}^{n+1}\right] \! \bigg\} \;, \nonumber 
\end{eqnarray}
with $E_k \geq E_1$
and where the series in $n$ is meant to converge to $1/2$ when $\mathrm{Tr} \left[\Pi_1^{\cal A} {\Pi_0^{\cal A}}^{\circleddash (n+2)}\Pi_1^{\cal A}{\Pi_0^{\cal A}}^{n+1}\right]=1$, {\it i.e.}, 
$\sum_{n=0}^{+\infty} (-1)^n \equiv \lim_{x \to - 1} \sum_{n=0}^{+\infty} x^n =\frac{1}{2}$. 
In order to vanish, this second-order correction term requires a stronger condition 
with respect to one necessary to nullify the first-order term in the LQTS, equation~\eqref{EQQ}. 
It is given by $\Pi_1^{\cal A}=\Pi_0^{\cal A}$, and corresponds to the requirement that 
the system ground state must be locally indistinguishable from the first excited level.

\paragraph{ \textbf{Heat capacity in the 2-level Landau-Zener scheme.}} 

Here we discuss the simplified case in which only the ground state (with energy $E_0$) 
and the first excited level (with energy $E_1$) of the global system Hamiltonian $H$ play a role. 
In particular, we are interested in addressing a situation where 
the ground-state energy gap $\Delta E \equiv E_1 - E_0$ may become very small, as
a function of some external control parameter $\Gamma$ ({\it e.g.}, the magnetic field
or the system anisotropy). 
A sketch is depicted in Fig.~\ref{fig:LZ}, 
and refers to the so called Landau-Zener (LZ) model~\cite{LZ}. 
This resembles the usual scenario when a given many-body system 
is adiabatically driven, at zero temperature, across a quantum phase transition point. 

\begin{figure}[!t]
  \includegraphics[height=0.5\columnwidth]{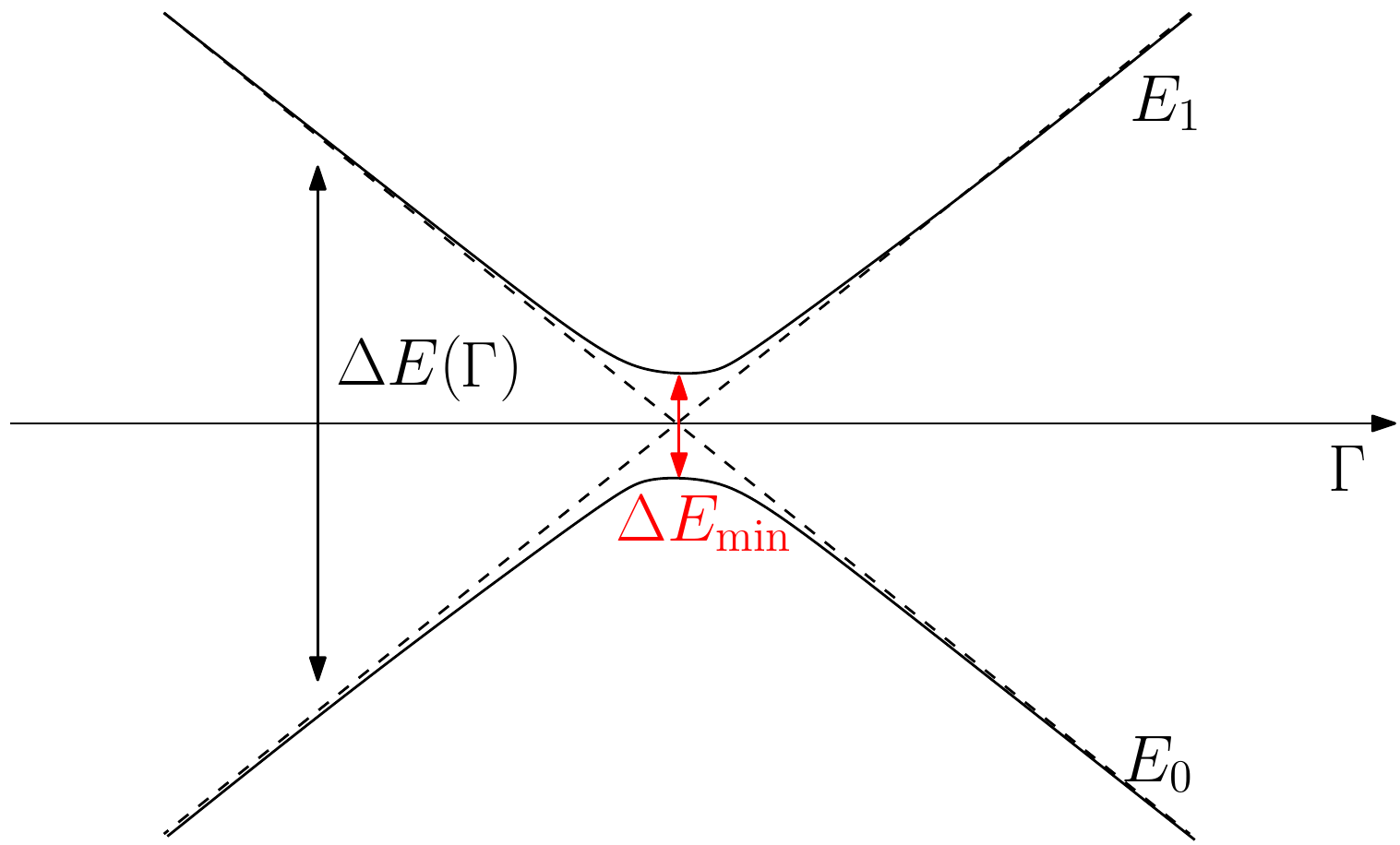}
  \caption{\textbf{The two-level Landau-Zener model.}
    A sketch of the behaviour of the two eigenenergies $(E_1,E_2)$ as a function 
    of some control parameter $\Gamma$.
    The gap $\Delta E = E_2-E_1$ displays a pronounced minimum in correspondence
    of a given $\Gamma^c$ value.}
  \label{fig:LZ}
\end{figure}

In correspondence of some critical value $\Gamma_c$, the gap is minimum. 
For a typical quantum many-body system, such minimum value $\Delta E_{\rm min}$
tends to close at the thermodynamic limit and a quantum phase transition occurs 
(notice that $\Gamma_c$ may depend on the system size).
Hereafter, without loss of generality, we will assume $E_0 = 0$ and take $E_1 = \Delta E$ 
so that the system heat capacity~\eqref{eq:Heat} reduces to:
\begin{equation}
  \mathfrak{S}_{\cal AB} [\rho_\beta] = \frac{n_1(\Delta E)^2 e^{-\beta \, \Delta E}}{n_0 + n_1 \, e^{-\beta \, \Delta E}} 
  - \left( \frac{n_1 \, \Delta E \, e^{-\beta \, \Delta E }}{n_0 + n_1 \, e^{-\beta \, \Delta E}} \right)^2 \, .
  \label{eq:LZthermsusc}
\end{equation}
Here $n_0$ and $n_1$ are the degeneracy indexes associated to the levels $E_0$ and $E_1$, respectively.
Notice that $\mathfrak{S}_{\cal AB}[\rho_\beta]$ is always non-negative and exhibits 
a non-monotonic behaviour as a function of $\Delta E$, at fixed $\beta$. 
Indeed it is immediate to see that $\mathfrak{S}_{\cal AB}[\rho_\beta] \to 0$
in both limits $\Delta E \to 0$ and $\Delta E \to +\infty$. 
For fixed $\beta, n_0$ and $n_1$, the heat capacity displays a maximum in correspondence 
of the solution of the transcendental equation 
\begin{equation}
  \frac{\partial \mathfrak{S}_{\cal AB}[\rho_\beta]}{\partial \Delta E}= 0 \iff e^{\beta \, \Delta E} = \frac{n_1}{n_0} \frac{(2 + \beta \, \Delta E)}{(\beta \, \Delta E - 2)}.
\end{equation}
In particular, for $n_0=n_1=1$ the latter relation is fulfilled for
$\Delta E^\star \approx 2.3994 / \beta$, while for $n_0=2, n_1=1$ 
it is fulfilled for $\Delta E^\star \approx 2.2278 / \beta$. 

It turns out that the behaviour of the heat capacity as a function of increasing $\Gamma$ 
in a two-level LZ scheme depends on the relative sizes of $\Delta E^\star$ and $\Delta E_{\rm min}$, 
as pictorially shown in Fig.~\ref{fig:LZ3}:
{\bf a)} if $\Delta E_{\rm min} > \Delta E^\star$, then $\mathfrak{S}_{\cal AB}[\rho_\beta]$
will exhibit a maximum in correspondence of $\Gamma_c$;
{\bf b)} if $\Delta E_{\rm min} < \Delta E^\star$, a maximum at $\Gamma_1^\star$ 
corresponding to $\Delta E = \Delta E^\star$ will appear, followed by a local minimum at $\Gamma_c$ 
and eventually by another maximum at $\Gamma_2^\star$ where the former condition occurs again.
Since $\Delta E^\star$ is a function of $\beta$, and $\Delta E_{\rm min}$ 
depends on the system size, the point of minimum gap can be signaled by a maximum 
or by a local minimum depending on the way the two limits $L \to +\infty$ (thermodynamic limit) 
and $\beta \to +\infty$ (zero-temperature limit) are performed. 

\begin{figure}[!t]
  \includegraphics[width=1\columnwidth]{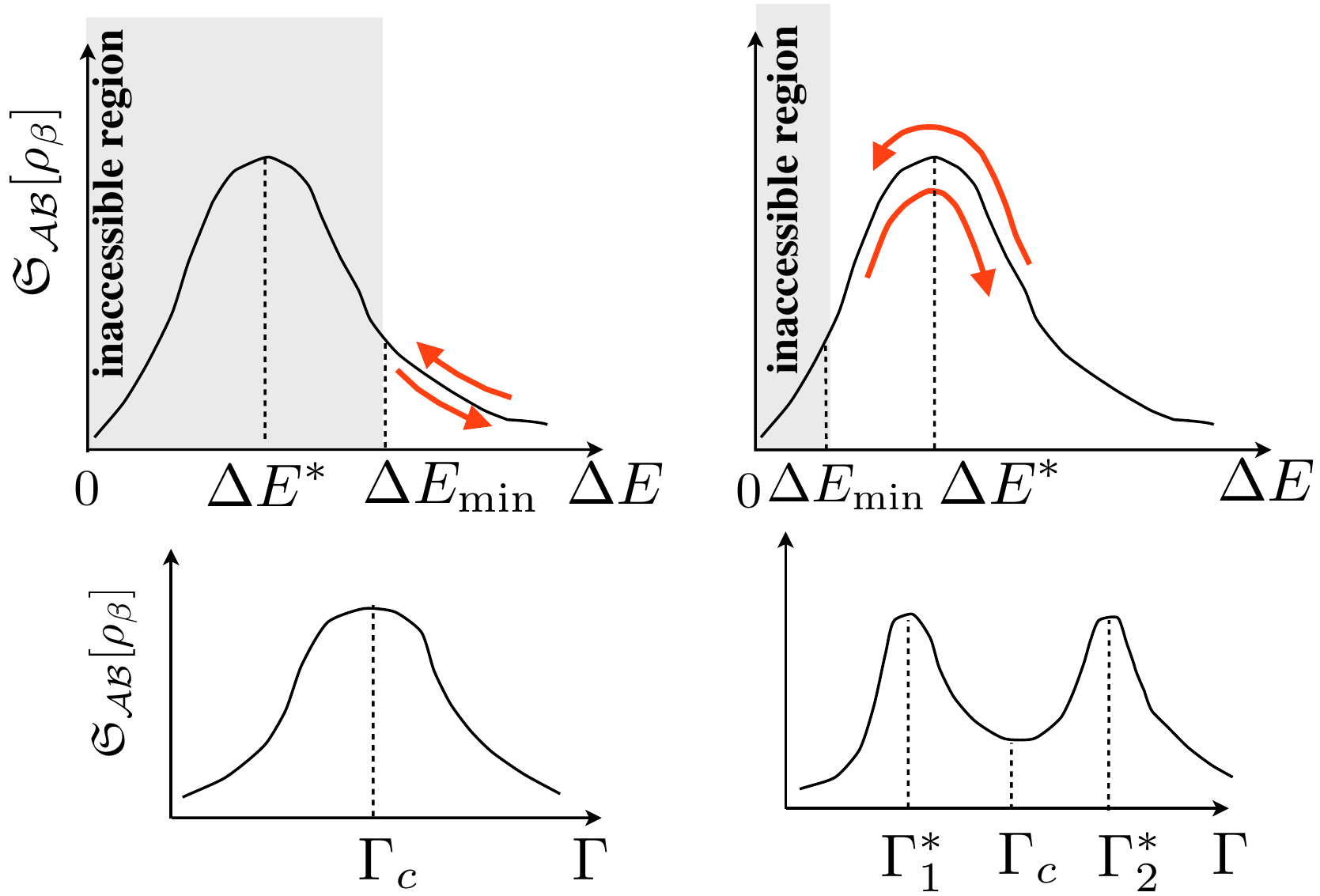}
  \caption{\textbf{Heat capacity in the LZ model.}
    The emergence of extremal points in the behaviour of $\mathfrak{S}_{\cal AB}[\rho_\beta]$ 
    as a function of increasing control parameter $\Gamma$ is associated to the relative 
    size of $\Delta E_{\rm min}$ with respect to $\Delta E^*$.
    The red arrows denote the changing of $\Delta E$ during a typical LZ protocol.
    One realizes that, if $\Delta E_{\rm min} > \Delta E^*$ one peak will appear (left),
    while if $\Delta E_{\rm min} < \Delta E^*$ one peaks will appear (right).}
  \label{fig:LZ3}
\end{figure}

\section{Acknowledgments}

We thank G. Benenti and G. De Chiara for useful discussions.
This work has been supported by MIUR through FIRB Projects RBFR12NLNA and PRIN ``Collective
quantum phenomena: from strongly correlated systems
to quantum simulators", by the EU Collaborative Project 
TherMiQ (Grant agreement 618074), and by the EU project COST Action MP1209 ``Thermodynamics in the 
quantum regime". 

\section{Author contributions}

All the authors conceived the work, agreed on the approach to pursue, analysed and discussed the results; 
A.D.P. performed the analytical calculations; 
D.R. performed the numerical calculations; V.G. and R.F. supervised the work.

\section{Additional material}

\paragraph{ \textbf{Supplementary Information}} accompanies this paper below.
\\

\paragraph{ \textbf{Competing financial interests:}} 
The authors declare no competing financial interests. 

\newpage 
\phantom{.}
\newpage


\section{Supplementary Information}

\subsection{Scaling of the LQTS with the dimension $n_{\cal A}$ \\ of subsystem ${\cal A}$.}

The local quantum thermal susceptibility (LQTS) for the subsystem ${\cal A}$
of a given system, as defined in equation~\eqref{LQTS_def}, is a quantity
which is tricky to be evaluated numerically. 
Apart from the exponential growth of the Hilbert space, extrapolating
the limit $\varepsilon \to 0$ typically requires high accuracies 
in the diagonalization procedure. This would generally limit the study of local 
thermometry in the many-body context up to very small systems.
The expression that we derived in equation~\eqref{eq:LQTS_numerical} circumvents
the latter problem and enables an easier manipulation of $\mathfrak{S}_{\cal A}[\rho_{\beta}]$.
However one still needs the full spectrum of the reduced density matrix $\rho_\beta^{\cal A}$, 
since all its eigenvectors have to be contracted with the product of the global equilibrium 
state times the system Hamiltonian, $\rho_\beta H$. This makes the whole analysis not 
straightforward, even for free-fermion systems as is the case for the Ising model $H_{\rm Ising}$.
For this reason we resort to an exact diagonalization technique.

\begin{figure}[!b]
  \includegraphics[width=0.8\columnwidth]{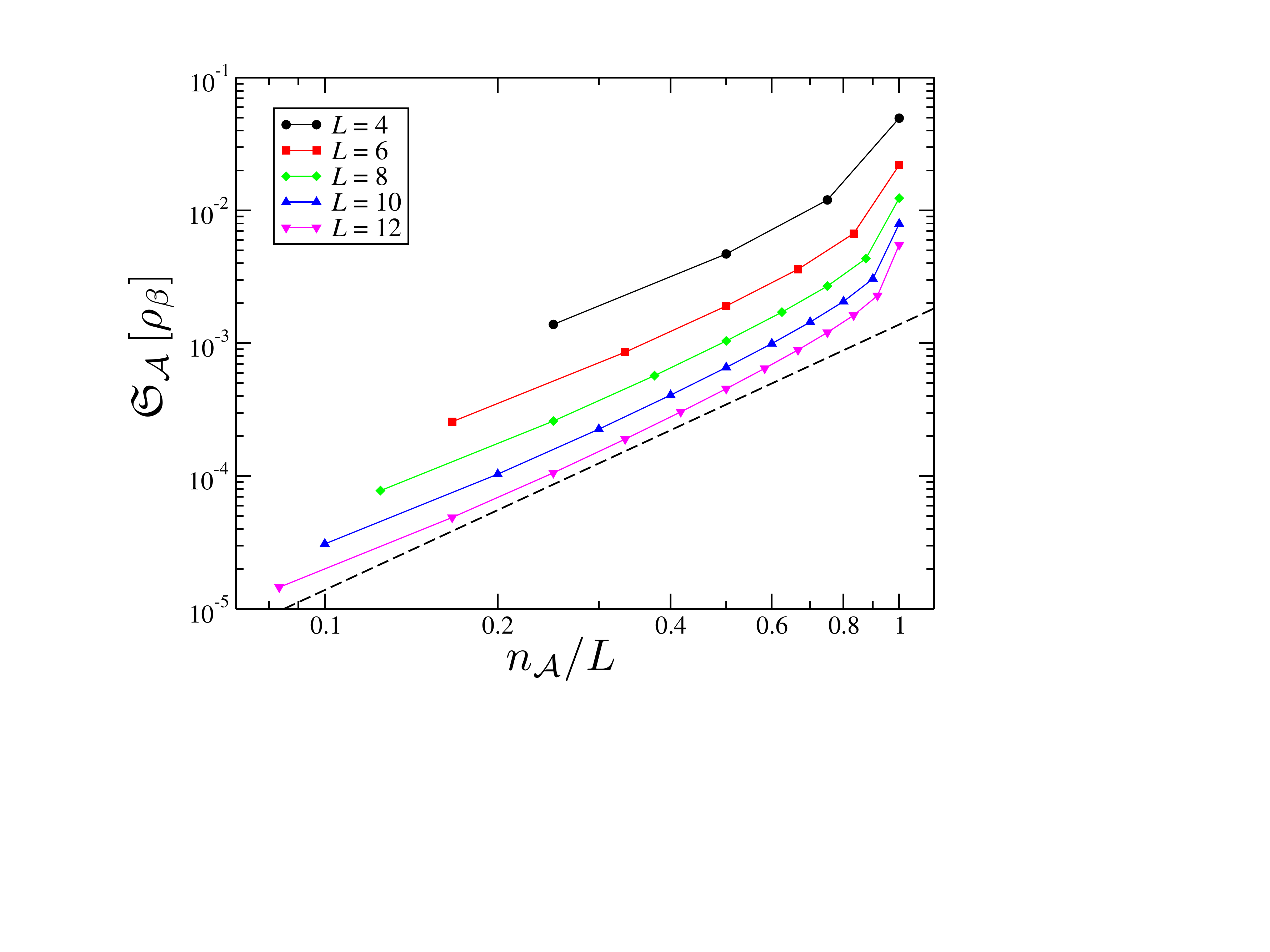}
  \caption{Peak value of $\mathfrak{S}_{\cal A}[\rho_{\beta}]$ for the Ising model, 
    as a function of the dimension of the measured subsystem $n_{\cal A}$. 
    Different data sets are for various system lengths $L$.
    The dashed line denotes a power-law behaviour $\mathfrak{S}_{\cal A} \sim (n/L)^2$, 
    and is plotted as a guideline. The temperature is chosen as a function of 
    the system size, as $\beta = 3L/4$.}
  \label{fig:IsingPeak_scal}
\end{figure}

Let us concentrate on the scaling of the LQTS 
with the size $n_{\cal A}$ of the measured subsystem for the Ising spin chain~\eqref{eq:Ising}. 
The maximum of $\mathfrak{S}_{\cal A}[\rho_{\beta}]$ as a function of $n_{\cal A}$,
and for different system sizes $L$ is shown in Fig.~\ref{fig:IsingPeak_scal}.
As noticed in the main text within the LZ discussion, the peak is related to
the appearance of the critical point, but deviations from its position
at the thermodynamic limit ($h_c=1$) are expected, due to the mutual interplay
of finite-size and finite-temperature effects. 
To balance them, we choose a size-dependent temperature $\beta = 3L/4$.
While with exact diagonalization we cannot go beyond the study of systems with size $L=12$, 
our data suggest a power-law behaviour of the peak value at least for very small subsystems,
scaling as
\begin{equation}
  \mathfrak{S}_{\cal A}[\rho_{\beta}] \sim (n/L)^\alpha,
  \label{eq:power-law}
\end{equation}
with an exponent $\alpha \approx 2$. 

We performed the same analysis also for the XXZ chain~\eqref{eq:XXZ},
and focused on the behaviour of $\mathfrak{S}_{\cal A}[\rho_{\beta}]$ around
the two critical points of the model, in correspondence to the two local minima.
As we did above, we fix a size-dependent temperature $\beta = 3L/4$
and study systems of size up to $L=12$.
In the left panel of Fig.~\ref{fig:XXZPeak_scal}, we concentrate
on the ferromagnetic point at $\Delta_{\rm f} = -1$.
Similarly to the critical point of the Ising model, a power-law scaling
of the type in equation~\eqref{eq:power-law} seems to emerge at small $n_{\cal A}$,
with an exponent $\alpha \approx 3$.
The scaling analysis at the antiferromagnetic point around $\Delta_{\rm af} = 1$ 
is less clear and probably requires larger system sizes (right panel). 
We notice the appearance of a cusp-like feature at $n_{\cal A} = L/2$, which reflects 
the different behaviour of $\mathfrak{S}_{\cal A}[\rho_{\beta}]$ for 
$\Delta \approx \Delta_{\rm af}$, depending on whether $n_{\cal A} \leq L/2$ 
or $n_{\cal A} > L/2$, as is clearly visible in the right panel of Fig.~\ref{fig:LQTS_Ising}. 
We have to stress that the departure of the two bunches of curves
becomes more evident at large values 
of $\beta$, while it tends to disappear when increasing the temperature.

\begin{figure}[!h]
  \includegraphics[width=0.95\columnwidth]{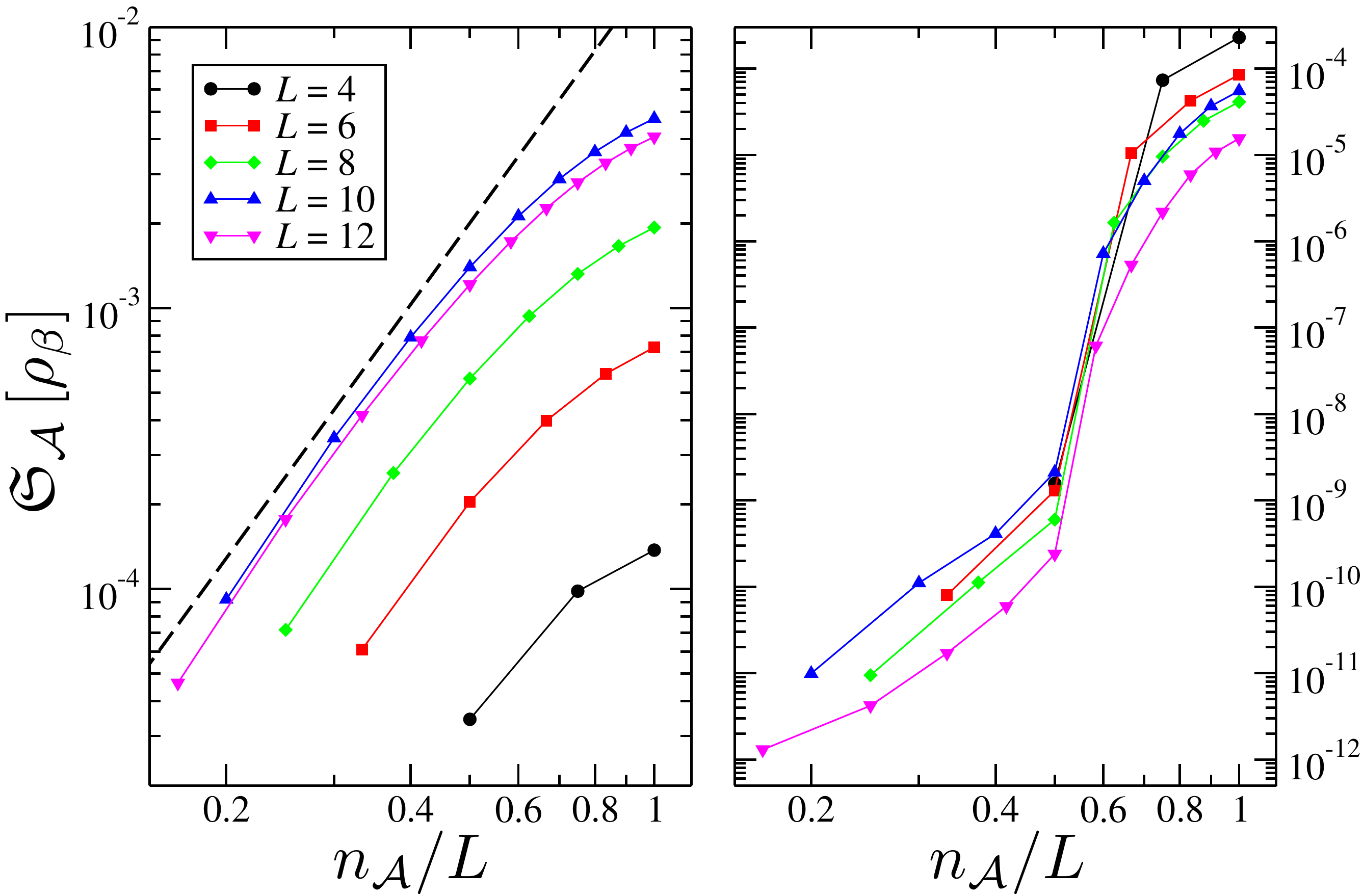}
  \caption{Same quantity and parameters as in Fig.~\ref{fig:IsingPeak_scal}, 
    but for the XXZ chain.
    The data in the two panels are for the minima of the LQTS close to 
    the critical points $\Delta_{\rm f} = -1$ (left) and $\Delta_{\rm af} = +1$ (right). 
    The dashed line in the left panel denotes the behaviour 
    $\mathfrak{S}_{\cal A} \sim (n/L)^3$ and is plotted as a guideline.}
  \label{fig:XXZPeak_scal}
\end{figure}

\subsection{Analysis of the heat capacity by considering \\ only the first excited levels.}

The heat capacity of the global system is quantified by the variance of the energy~\eqref{eq:Heat}, 
and it can be calculated easier than the LQTS, 
since it does not involve any partial tracing over a portion of the system
and depends only on the spectral properties.
In particular, at low temperatures $\beta \to +\infty$, the relevant contributions 
to $\mathfrak{S}_{\cal AB}[\rho_{\beta}]$ will be provided only by the first low-lying 
energy levels, as discussed within the Landau-Zener context.
Here we provide an analysis of the role of such excited states
in the two many-body systems that we address in this work.

We start from the Ising chain~\eqref{eq:Ising}.
The low-lying spectrum is shown in the upper panel of Fig.~\ref{fig:Ising_fewlev}, 
where we plot the gaps $\Delta E_j = E_j-E_0$ between the ground-state energy $E_0$ 
and those of the first excited states $E_j$, $j>0$.
Since we are considering the full Hilbert space of the system, 
for $L \to \infty$ the ground state in the ferromagnetic side ($|h|<h_c$) 
is doubly degenerate, while in the paramagnetic side ($|h|>h_c$) a gap opens up
monotonically as $\Delta E_1 = 2|h-h_c|$.
At finite values of $L$ (data with symbols), the double degeneracy survives
only at $h=0$ and $\Delta_1$ monotonically increases with $h>0$.

The scenario emerging by only keeping contributions to $\mathfrak{S}_{\cal AB}[\rho_{\beta}]$ 
coming from the ground state and the first excited state, 
while neglecting any other excited level, is quite clear.
In the thermodynamic limit and for finite $\beta$, the heat capacity is rigorously zero 
for $h<h_c$ and is finite for $h>h_c$, exhibiting a non-monotonic behaviour.
The position $h_c^{\rm *}$ of the maximum depends on the value of $\beta$ and
shifts toward $h_c=1$ as long as $\beta \to \infty$. Note however that 
$\mathfrak{S}_{\cal AB}[\rho_{\beta}] \stackrel{\beta \to \infty}{\longrightarrow} 0$.
In summary, the position of the maximum depends on a competition between 
the following two effects: $i)$ decreasing $L$ tends to shift the peak
toward the ferromagnetic phase ($h_c^{*} < h_c$); $ii)$ decreasing $\beta$ 
tends to shift the peak toward the paramagnetic phase ($h_c^{*} > h_c$).

\begin{figure}[!t]
  \includegraphics[height=0.72\columnwidth]{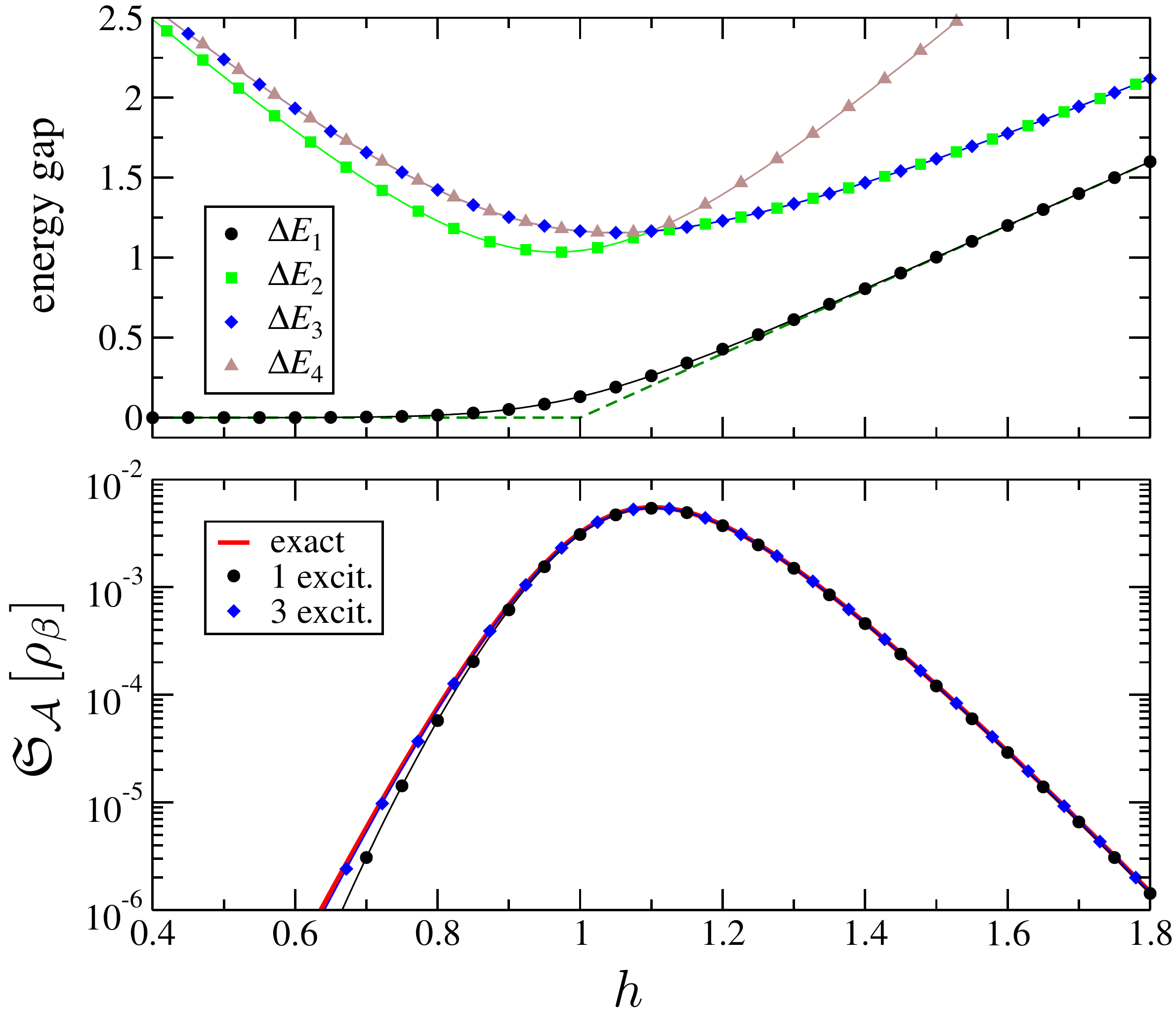}
  \caption{Heat capacity for the Ising chain,
    without breaking the ${\mathbb Z}_2$ symmetry.
    The upper panel shows the first energy gaps $\Delta E_j = E_j - E_0$
    as a function of the transverse field $h$. The dashed green line denotes
    the gap $\Delta E_1$ in the thermodynamic limit.
    The lower panel shows the global quantum thermal susceptibility 
    $\mathfrak{S}_{\cal AB}[\rho_{\beta}]$ for the full spectrum (red), 
    and when taking into account the first eigenenergies only.
    Here we consider a chain with $L=12$ and a temperature $\beta = 9$.}
  \label{fig:Ising_fewlev}
\end{figure}

In the bottom panel of Fig.~\ref{fig:Ising_fewlev} we observe that,
keeping only the ground and the first excited level in the computation 
of the energy variance~\eqref{eq:Heat}, already gives an excellent 
approximation to the exact value of $\mathfrak{S}_{\cal AB}[\rho_{\beta}]$.
Therefore the above situation effectively applies at large $\beta$ values.
In the presence of a symmetry breaking mechanism, one would
find a situation analogous to the LZ scheme: 
the presence of a maximum in proximity of $h_c$ would not be guaranteed, 
and a local minimum may appear (according to the values of $L$ and $\beta$).

Let us now consider the XXZ-Heisenberg chain~\eqref{eq:XXZ}.
Analogously as above, the upper panel of Fig.~\ref{fig:XXZ_fewlev} displays the first 
energy levels in the full Hilbert space (we do not break the symmetry associated 
to the conservation of the global magnetization along the $z$ axis).
For $\Delta < \Delta_{\rm f}$, the ground state is fully polarized along $z$,
and presents a double degeneracy at any length.
At $\Delta = \Delta_{\rm af}$ we see a cusp in the ground-state energy gap
that is due to a level crossing (this feature persists at larger sizes $L$).
This scenario characterizes the contribution to the heat capacity coming
only by the ground state and the first excited state (black line in the bottom panel).
In particular we see that, in this approximation, $\mathfrak{S}_{\cal AB}[\rho_{\beta}]$
is zero for $\Delta < \Delta_{\rm f}$, while it becomes finite at $\Delta > \Delta_{\rm f}$.
On the other side, the $\Delta_{\rm af}$ is signaled by a cusp, which displays a minimum,
and which raises as a consequence of the cusp in the ground-state energy gap.

Contrary to the Ising model, if also other excited levels are taken into account, 
non-negligible corrections to the energy variance appear
(see the bottom panel of Fig.~\ref{fig:XXZ_fewlev}).
In particular we note the emergence of a minimum around $\Delta_{\rm f}$, 
due to the fact that $\mathfrak{S}_{\cal AB}[\rho_{\beta}]$ becomes finite 
also for $J_z < \Delta_{\rm f}$.
The corner point at $\Delta_{\rm af}$ is smeared into a local minimum with a continuous derivative.
The positions of the two minima are influenced by finite-size and finite-temperature effects.

\begin{figure}[!b]
  \includegraphics[height=0.72\columnwidth]{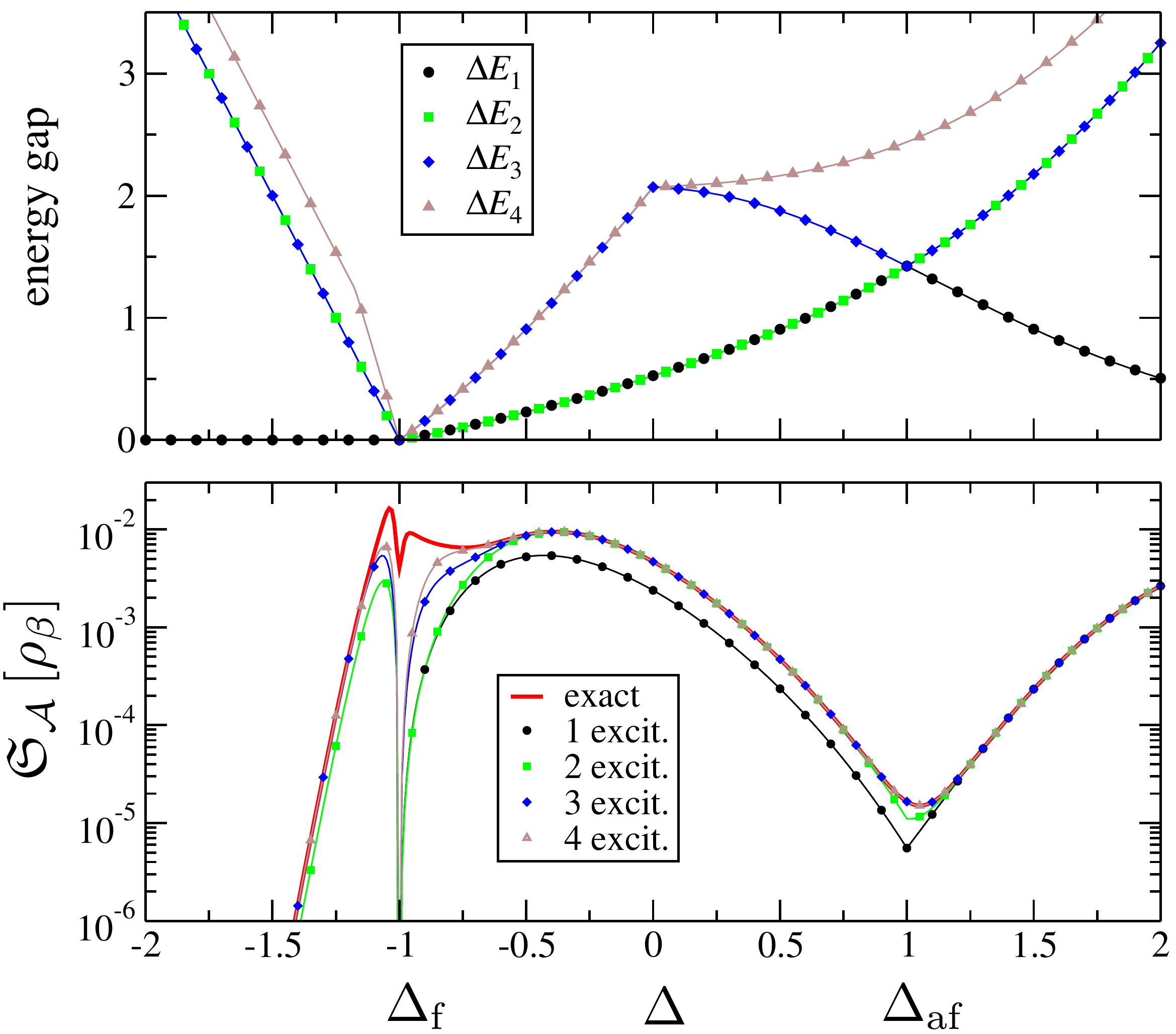}
  \caption{Same quantities and parameters as in Fig.~\ref{fig:Ising_fewlev}, 
    but for the XXZ chain in the full many-body Hilbert space of $L$ spins.}
  \label{fig:XXZ_fewlev}
\end{figure}

\end{document}